\def\R{\mathbb{R}}
\def\N{\mathbb{N}}
\def\e{\varepsilon}
\def\1{\mathds{1}}
\def\M{\mathcal{M}}
\def\P{\mathcal{P}}
\def\Q{\mathcal{Q}}
\def\MA{M(X,\Sigma)}
\def\M1{M_1(X,\Sigma)}
\def\V{W}
\def\m{\mathfrak{m}}
\def\Hra{\mathop{H_\alpha^R}}
\def\hra{\mathop{h_\alpha^R}}
\def\Hta{\mathop{H_\alpha^T}}
\def\hta{\mathop{h_\alpha^T}}
\def\Hs{\mathop{H^S}}
\def\hs{\mathop{h^S}}
\def\H{\mathop{H}}
\def\h{\mathop{h}}
\def\htva{\mathop{hv_{\alpha}^T}}
\def\hsv{\mathop{hv^S}}
\def\Hsv{\mathop{Hv^S}}
\def\hv{\mathop{hv}}
\def\Hv{\mathop{Hv}}
\def\for{\mbox{  for }}
\newtheorem{observation}{Observation}[section]
\newtheorem{theo}{Theorem}[section]
\newtheorem{prop}{Proposition}[section]
\theoremstyle{definition}
\newtheorem{definition}{Definition}[section]
\newtheorem{example}{Example}[section]
\title{Weighted Approach to General Entropy Function}
\author{Marek \'{S}mieja}
\date{Institute of Computer Science\\
Department of Mathematics and Computer Science\\
Jagiellonian University\\
Lojasiewicza 6, 30-348, Krakow, Poland\\
Email: marek.smieja@ii.uj.edu.pl}
\begin{document}

\maketitle

\begin{abstract}
The definition of weighted entropy allows for easy calculation of the entropy of the mixture of measures. In this paper we investigate the problem of equivalent definition of the general entropy function in weighted form. We show that under reasonable condition, which is satisfied by the well-known Shannon, R\'enyi and Tsallis entropies, every entropy function can be defined equivalently in the weighted way. As a corollary, we show how use the weighted form to compute Tsallis entropy of the mixture of measures.
\end{abstract}

\textbf{Keywords: } entropy, weighted entropy, mixture of measures, coding, data compression 

\newpage

\section{Introduction}
The entropy is an important tool used to examine and analyze the behavior of statistical and physical systems. It is widely applied in information theory, thermodynamics, quantum mechanics and many others fields of science (see \citeasnoun{El}, \citeasnoun{Go}, \citeasnoun{Se}, \citeasnoun{Wu}, \citeasnoun{Fra} and \citeasnoun{Li}).

There are many kinds of the entropy functions. One of the most popular is the Shannon entropy \cite{Sh}. Given a probability measure $\mu$ on data space $X$ a countable partition $\P$, the Shannon entropy of $\P$ is defined by:
$$
\hs(\mu;\P)=-\sum_{P \in \P} \mu(P) \log_2 (\mu(P)).
$$
It determines the statistical amount of memory used in lossy coding elements of $X$ by the elements of partition $\P$. In our consideration we use more general notion of the Shannon entropy which is based on R\'enyi's idea of entropy dimension \cite{Re2}. Given a measurable cover $\Q$ of $X$, we define $\Hs(\mu;\Q)$ as an infimum of the entropies taken over all partitions finer than $\Q$ (see the next section).

We have recently propose an equivalent weighted approach to the Shannon entropy which is based on measures instead of partitions \cite{Sm}. It can be seen as a horizontal splitting of data space in contrast to a classical vertical one. Roughly speaking, given a division of measure $\mu$ into ``submeasures'' $(\mu_i)_{i \in \N}$ (i.e. $\mu=\sum\limits_{i \in \N} \mu_i(X)$), we rewritten the entropy in terms of measures as:
$$
\hsv(\mu;(\mu_i)_{i \in \N})=-\sum_{i \in \N} \mu_i(X) \log_2 (\mu_i(X)).
$$
This reformulation allows to replace the undefined operation on partitions $\P_1+\P_2$ into well defined operation on functions $\mu_1 + \mu_2$. It is extremely useful when computing the entropy of the mixture of measures. From practical point of view this approach describes the idea of random lossy coding.

Others popular entropy functions are the R\'enyi and Tsallis entropy of order $\alpha$ (see \citeasnoun{Re1} and \citeasnoun{Tsa}). They were created as the one parametric family of generalized entropy functions (precise definitions are given in the next section). The dependence of parameter $\alpha$, allows to weaken or emphasize some probability events. If $\alpha > 1$ then the entropy is more sensitive on events that occur often while for $\alpha \in (0,1)$ the entropy is more sensitive on the events that happen seldom \cite{Mas}. These both entropies generalize the Shannon entropy.

In this paper, we apply the idea of weighted entropy for various kinds of entropy functions. For this purpose we define a condition under which every entropy function can be equivalently defined in the weighted way. More precisely, let $\P$ be a partition and let $f:\R \to \R$ and $g:[0,1] \to \R$ be continuous functions. We say that the function of the form:
$$
\h(\mu;\P):=f(\sum_{P \in \P} g(\mu(P)) ),
$$
satisfies the condition of general entropy function if $f$ is increasing, $g$ is subadditive and concave or $f$ is decreasing, $g$ is superadditive and convex. It is easy to see that Shannon, R\'enyi and Tsallis entropies satisfy the above condition. Hence the weighted approach is equivalently defined for these kinds of entropies. 

As it was mentioned, the weighted approach allows to rewritten the entropy in the form of the function of measures instead of partition. It is useful in calculation the entropy of the combination of measures. As an example of the application of the weighted entropy, we show in this paper how use the weighted form for obtaining the estimation of Tsallis entropy of the mixture of measures (see Theorem \ref{corEnt}). We prove that calculated bounds are sharp. 

\section{Weighted Approach to Entropy.}

In this section we will show how apply the weighted approach to the general entropy function. Before that, let us recall the definition of weighted Shannon entropy \cite{Sm} to get an idea of weighted entropy. From now on, if not stated otherwise, we always assume that $(X, \Sigma, \mu)$ is a probability space.

The entropy is defined on the partition of data space $X$. We say that a family $\P \subset \Sigma$ is a partition of $X$ if $\P$ is countable family of disjoint sets and
$$
\mu(X \setminus \bigcup_{P \in \P}P) = 0.
$$
The Shannon entropy is defined as follows:
\begin{definition}
Let $\P$ be a partition of $X$. The Shannon entropy of $\P$ is given by:
$$
\hs(\mu;\P):= - \sum_{P \in \P} \mu(P) \log_2(\mu(P)).
$$
\end{definition}

If we consider the problem of lossy data compression then the partition is interpreted as a coding alphabet. We map every point $x \in X$ to unique $P \in \P$ such that $x \in P$. The entropy determines a statistical amount of memory per one element used in the lossy coding generated by partition $\P$.

Based on the R\'enyi idea of entropy dimension we generalize the Shannon entropy on the case of any measurable cover of data space. We say that one family $\P$ of subsets of $X$ is finer that the second family $\Q$ iff for every $P \in \P$ there exists $Q \in \Q$ such that $P \subset Q$. 
\begin{definition} \label{entrOgol}
Given a measurable cover $\Q$ of $X$ the Shannon entropy of $\Q$ is
$$
\Hs(\mu;\Q):=\inf\{\hs(\mu;\P) \in [0,\infty] : \mbox{$\P$ is a partition and $\P \prec \Q$}\}.
$$
\end{definition}
Clearly, if there is no partition finer than $\Q$, then $\H(\mu;\Q)=\infty$, as $\inf(\emptyset)=\infty$.

In the case of coding, cover $\Q$ defines the maximal error of the compression. We allow only codings with use of partitions which are finer than $\Q$ (we say then that partition is $\Q$-acceptable). The entropy describes the best lossy coding determined by $\Q$-acceptable alphabets.

One of the simplest error control family in metric space consists of all balls with given radius or cubes with specific edge length. Such notions were used by A. R\'enyi \cite{Re2} or E. C. Posner \cite{Po2}. Our approach allows to differ the size of particular sets from the partition. Intuitively, more probable events should be coded with smaller sets while less probable with bigger.

The inspiration of weighted entropy, lies in the horizontal partitioning of data space instead of classical vertical one. We substitute the division of space $X$ into partition by the division of measure $\mu$ into ``submeasures''\footnote{The idea of weighted entropy is indebted to the notion of weighted Hausdorff measures considered by J. Howroyd \cite{Ho1}.}. Roughly speaking, this approach provides the computation and interpretation of the entropy with respect to ``formal'' convex combination $a_1\P_1+a_2\P_2$, where $\P_1,\P_2$ are partitions.

We denote the division of measure $\mu$ with respect to $\Q \subset \Sigma$ by:
\begin{equation} \label{ZbiorV}
\begin{array}{ll}
\V(\mu;\Q):= & \{\m:\Q \ni Q \rightarrow \m_Q \in \MA : \\[0.4ex]
& \m_Q(X \setminus Q) = 0 \text{ for every $Q \in \Q$ and } \sum_{Q \in \Q} \m_Q = \mu \},
\end{array}
\end{equation}
where $\MA$ is the family of all measures on $(X, \Sigma)$. Observe that every function $\m \in \V(\mu;\Q)$ is non-zero on at most countable number of sets of $\Q$. We define the weighted Shannon entropy:
\begin{definition} 
The weighted Shannon entropy of a given $\m \in \V(\mu;\Q)$ by:
\begin{equation} \label{ShWei}
\hsv(\mu; \m):= - \sum_{Q \in \Q} \m_Q(X) \log_2(\m_Q(X)) \text{,}
\end{equation}
while the weighted Shannon entropy of measurable cover $\Q$ of $X$ is
$$
\Hsv(\mu;\Q):=\inf\{\hsv(\mu;\m) \in [0,\infty] : \m \in \V(\mu;\Q)\}.
$$
\end{definition}
The sum in the formula (\ref{ShWei}) is taken over $Q \in \Q$ such that $\m_Q(X) > 0$.

We show in \cite[Theorem II.1]{Sm} that the weighted $\mu$-entropy of $\Q$ is equal to the classical one which in consequence allows to compute the entropy of the mixture of sources:
\medskip
\begin{flushleft} \textbf{Shannon entropy of the mixture\cite[Theorem III.1]{Sm}:} \emph{Let $a_1, a_2 \in [0,1]$ be such that $a_1 + a_2 = 1$. If $\mu_1, \mu_2$ are probability measures and $\Q \subset \Sigma$ then:}
$$
\Hs(a_1\mu_1+a_2\mu_2;\Q) \geq a_1 \Hs(\mu_1;\Q)+a_2\Hs(\mu_2;\Q)  
$$
\emph{and}
$$
\Hs(a_1\mu_1+a_2\mu_2;\Q) \leq a_1 \Hs(\mu_1;\Q)+a_2\Hs(\mu_2;\Q) - a_1 \log_2(a_1) - a_2 \log_2(a_2).
$$
\medskip
\end{flushleft}

We distinguish other kinds of the entropy - R\'enyi and Tsallis entropies. For the convenience of the reader, we give their definitions.
\begin{definition}
Let $\alpha \in (0,\infty) \setminus \{1\}$ and let $\P$ be a partition of $X$. The R\'enyi entropy of $\P$ is
$$
{\hra}(\mu;\P):=\frac{1}{1-\alpha} \log_2[\sum_{P \in \P} \mu(P)^\alpha]
$$
and the Tsallis entropy of $\P$ is
$$
{\hta}(\mu;\P):=\frac{1}{1-\alpha} (\sum_{P \in \P} \mu(P)^\alpha - 1).
$$
These definitions are naturally generalized for any measurable cover $\Q \subset \Sigma$ as in Definition \ref{entrOgol}. We denotes these quantities by $\Hra(\mu;\Q)$ and $\Hta(\mu;\Q)$, respectively.
\end{definition}
For review of other kinds of information measures see books by J. N. Kapur \cite{Kap} and C. Arndt \cite{Arn}.

We move to the definition of weighted entropy for general entropy function. Let us first define what we are mean by the general entropy function.
\begin{definition}
Let $\P$ be a partition of $X$ and let $f:\R \to \R$ and $g:[0,1] \to \R$ be continuous functions. We say that the function of the form:
$$
\h(\mu;\P):=f(\sum_{P \in \P} g(\mu(P)) ),
$$
satisfies the condition of general entropy function (CGEF) if one of the following conditions is valid:
\begin{enumerate}
\item $f$ is increasing, $g$ is subadditive and concave,
\item $f$ is decreasing, $g$ is superadditive and convex.
\end{enumerate}
\end{definition}

The classical examples of the entropy function which satisfy the above condition are Shannon, R\'enyi and Tsallis entropies:
\begin{observation} 
Shannon, R\'enyi and Tsallis entropies satisfy the condition of general entropy function:
\begin{itemize}
\item For Shannon entropy, we have $f(x)=-x$ and $g(x)=x \log_2(x)$; 
\item For R\'enyi entropy, we have $g(x)=\frac{1}{1-\alpha} \log_2(x)$ and $g(x)=x^\alpha$ where $\alpha \in (0,\infty)\setminus\{1\}$;
\item For Tsallis entropy, we have $g(x)=\frac{1}{1-\alpha} (x - 1)$ and $g(x)=x^\alpha$ where $\alpha \in (0,\infty)\setminus\{1\}$.
\end{itemize}
\end{observation}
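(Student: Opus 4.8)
The plan is to verify each of the three cases directly, by matching the functional form to $\h(\mu;\P)=f\bigl(\sum_{P\in\P}g(\mu(P))\bigr)$, checking the monotonicity of $f$, and then establishing the concavity/convexity together with the sub- or super-additivity of $g$. In every case the monotonicity of $f$ and the convexity type of $g$ reduce to a one-line sign computation, so the only substantive ingredient is the additivity property of $g$; I would isolate that as a preliminary step.

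First I would record two elementary inequalities that drive all three verifications. For the power function $g(x)=x^\alpha$ on $[0,1]$, I would note that $g''(x)=\alpha(\alpha-1)x^{\alpha-2}$ is negative when $\alpha\in(0,1)$ and positive when $\alpha>1$, giving concavity in the first range and convexity in the second; the matching additivity is the standard power inequality $(x+y)^\alpha\le x^\alpha+y^\alpha$ for $\alpha\in(0,1)$ and $(x+y)^\alpha\ge x^\alpha+y^\alpha$ for $\alpha>1$, valid for all $x,y\ge 0$ with $x+y\le 1$. For $g(x)=x\log_2 x$ on $[0,1]$ (with the convention $g(0)=0$), convexity follows from $g''(x)=1/(x\ln 2)>0$, and superadditivity follows by writing $(x+y)\log_2(x+y)=x\log_2(x+y)+y\log_2(x+y)\ge x\log_2 x+y\log_2 y$, using $x+y\ge x,\,y$ and the monotonicity of $\log_2$.

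With these in hand each case reduces to reading off the sign of $f$. For Shannon entropy, $f(x)=-x$ is decreasing, so condition~2 applies and we need $g(x)=x\log_2 x$ superadditive and convex, both established above. For R\'enyi and Tsallis entropies I would take $g(x)=x^\alpha$ and split on the value of $\alpha$: when $\alpha>1$ the factor $\tfrac{1}{1-\alpha}$ is negative, so $f(x)=\tfrac{1}{1-\alpha}\log_2 x$ (R\'enyi) and $f(x)=\tfrac{1}{1-\alpha}(x-1)$ (Tsallis) are both decreasing, matching condition~2 with $g$ convex and superadditive; when $\alpha\in(0,1)$ the factor is positive, so both $f$ are increasing, matching condition~1 with $g$ concave and subadditive. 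In each instance a direct substitution confirms that $f\bigl(\sum_{P}g(\mu(P))\bigr)$ reproduces the stated entropy.

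The main, and essentially only, obstacle is the sub/superadditivity step, since the remaining properties are immediate derivative-sign checks and the identification of $f$ and $g$ is a direct comparison with the definitions. I expect the power inequality and the $x\log_2 x$ estimate to be the crux; once they are in place, the three entropies follow by routine bookkeeping on the sign of $1-\alpha$ and the convention $0\log_2 0=0$.
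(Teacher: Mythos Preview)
Your proposal is correct. The paper states this result as an Observation without supplying any proof, so there is nothing substantive to compare against; your verification simply fills in the details the paper leaves implicit, and it does so along the obvious route dictated by the definition of CGEF (sign of $f'$ via the factor $\tfrac{1}{1-\alpha}$, sign of $g''$ for convexity, and the standard power and $x\log_2 x$ inequalities for sub/superadditivity).
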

The CGEF will be crucial to define the form of weighted entropy equivalent to the classical one. Let us assume that the entropy function $\h$ satisfies the CGEF. Then given the error-control family $\Q$, we can rewritten it as follows:
$$
\hv(\mu;\m)=f(\sum_{Q \in \Q} g(\m(\Q))) \text{, for } \m \in \V(\mu;\Q).
$$
This is the weighted form of the entropy function $\h$. The generalized version of the weighted entropy $\H$ of $\Q$ is
$$
\Hv(\mu;\Q)=\inf\{\hv(\mu;\m) \in [0,\infty] : \mbox{ for } \m \in \V(\mu;\Q)\}.
$$
The weighted entropy provides the form of the entropy as a function of measures instead of partition. It is useful when computing the entropy of the combination of measures..

In the next section we show that weighted definition of the entropy is equivalent to the classical one if the entropy function satisfies the CGEF. As an example of the application of weighted entropy, in Section \ref{TsaSec} we estimate the Tsallis entropy of the combination of measure.


\section{Equivalence between classical and weighted entropy}

We show the equivalence between classical and weighted form of the entropy under the CGEF. The proofs are based on the idea introduced in \cite{Sm}. To derive the equality we will show two inequalities.
\begin{prop} \label{prop}
Let $\Q \subset \Sigma$ and let $\h$ be the entropy function that satisfies the CGEF. Then
$$
{\Hv}(\mu; \Q) \leq {\H}(\mu; \Q).
$$
\end{prop}
\begin{proof}
Let us first observe that if there is no $\mu$-partition finer than $\Q$ then ${\H}(\mu;\Q)=\infty$ and the inequality holds.

Thus, we assume that $\P$ is a $\mu$-partition finer than $\Q$. We construct a function $\m \in \V(\mu;\Q)$ with not grater entropy than $\P$. 

Let us notice that, there exists a mapping $\pi: \P \to \Q$ such that $P \subset \pi(P)$ since $\P \prec \Q$. Next, we put 
$$
\P_\Q := \{P_Q\}_{Q \in \Q},
$$
where $P_Q:=\bigcup\limits_{P:\pi(P)=Q}P$. Finally, we obtain a function $\m:\Q \ni Q \rightarrow \mu_{|P_Q} \in \MA$.

Our aim is to verify that $\m \in \V(\mu; \Q)$. It is easy to see that $\P_\Q$ is a $\mu$-partition and $P_Q \subset Q$, for every $Q \in \Q$. Thus, we have
$$
\sum_{Q \in \Q} \m_Q(X) = \sum_{Q \in \Q} \mu_{|P_Q}(Q) 
= \sum_{Q \in \Q} \mu(P_Q) = \mu(X).
$$
The above sums are taken only over $Q \in \Q$ such that $\m_Q(X) > 0$. Moreover, we get
$$
\m_Q(X \setminus Q) = \mu_{|P_Q}(X \setminus Q) 
\leq \mu_{|Q}(X \setminus Q)= 0,
$$
for $Q \in \Q$. We conclude that $\m \in \V(\mu;\Q)$. 

We would like to check that the entropy of $\m$ is not grater than the entropy of $\P$. For this purpose, we use the CGEF:
$$
{\hv}(\mu; \m) 
= f \big(\sum_{Q \in \Q}g(\m_Q(X))\big) 
= f \big(\sum_{Q \in \Q} g( \mu_{|P_Q}(X))\big) 
$$
$$
= f \big(\sum_{Q \in \Q} g(\mu(P_Q))\big) 
= f\big(\sum_{Q \in \Q} g(\mu(\bigcup_{P: \pi(P)=Q}P))\big) 
$$
$$
\leq f\big(\sum_{Q \in \Q} \sum_{P: \pi(P)=Q} g(\mu(P))\big) 
= f\big(\sum_{P \in \P} g(\mu(P))\big)
={\h}(\mu; \P).
$$
Hence, we get that ${\Hv}(\mu; \Q) \leq {\H}(\mu; \Q)$.
\end{proof}

To derive the inequality ${\Hv}(\mu;\Q) \geq {\H}(\mu;\Q)$ we will apply Hardy Littlewood Polya Theorem. The version of Hardy Littlewood Polya Theorem for finite sequences is given in \cite[Theorem 1.5.4]{Ni} while the case of infinite sequences is presented in \cite[Appendix A]{Sm}. Let us recall this theorem:
\medskip
\begin{flushleft} Hardy Littlewood Polya Theorem. \emph{Let $a>0$ and let $\varphi:[0,a] \to (0,\infty)$, $\varphi(0)=0$ be a continuous function.
Let $(x_i)_{i \in I}, (y_i)_{i \in I} \subset [0,a]$ be given sequences where
either $I=\N$ or $I=\{1,\ldots,N\}$ for a certain $N \in \N$. We assume that $(x_i)_{i \in I}$ is a nonincreasing sequence and}
$$
\sum_{i=1}^n x_i \leq \sum_{i=1}^n y_i \for n \in I,
$$
$$
\sum_{i \in I}x_i=\sum_{i \in I}y_i.
$$
\emph{Then}
\begin{itemize}
\item $\sum_{i \in I} \varphi(x_i) \geq \sum_{i \in I} \varphi(y_j)$ if $\varphi$ is concave,
\item $\sum_{i \in I} \varphi(x_i) \leq \sum_{i \in I} \varphi(y_j)$ if $\varphi$ is convex.
\end{itemize}
\medskip
\end{flushleft}

Then the following proposition holds:
\begin{prop} \label{waz}
Let $\Q =\{Q_i\}_{i \in I} \subset \Sigma$, where either $I=\N$ or $I=\{1,\ldots,N\}$ for a certain $N \in \N$. Let $\m \in \V(\mu;\Q)$ and let $\h$ be the entropy function which satisfies the CGEF. We assume that
\begin{itemize}
 \item $\mu(X \setminus \bigcup\limits_{i \in I}Q_i) = 0$,
 \item the sequence $I \ni i \rightarrow \m_{Q_i}(X)$ is nonincreasing.
\end{itemize}
We define the family $\P=\{P_i\}_{i \in I} \subset \Sigma$ by the formula
$$
P_1:=Q_1, \, P_i:=Q_i \setminus \bigcup_{k=1}^{i-1} Q_{k} \for i \in I, i \geq 2.
$$
Then $\P$ is a $\mu$-partition, $\P \prec \Q$ and
\begin{equation} \label{cosik}
{\hv}(\mu; \m) \geq {\h}(\mu;\P).
\end{equation}
\end{prop}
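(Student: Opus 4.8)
The plan is to first dispose of the easy structural claims that $\P$ is a $\mu$-partition refining $\Q$, and then to reduce the inequality (\ref{cosik}) to a single application of the Hardy Littlewood Polya Theorem with $\varphi=g$. First I would check the structural claims. The sets $P_i$ are pairwise disjoint: for $i<j$ we have $P_j\subset Q_j\setminus Q_i$ whereas $P_i\subset Q_i$, so $P_i\cap P_j=\emptyset$. A straightforward induction gives $\bigcup_{i=1}^n P_i=\bigcup_{i=1}^n Q_i$ for every $n$, hence $\bigcup_{i\in I}P_i=\bigcup_{i\in I}Q_i$; combined with the hypothesis $\mu(X\setminus\bigcup_{i\in I}Q_i)=0$ this shows $\mu(X\setminus\bigcup_{i\in I}P_i)=0$, so $\P$ is a $\mu$-partition. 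Since $P_i\subset Q_i$ for every $i$, we have $\P\prec\Q$.

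For the inequality, write $x_i:=\m_{Q_i}(X)$ and $y_i:=\mu(P_i)$, so that $\hv(\mu;\m)=f(\sum_{i\in I}g(x_i))$ and $\h(\mu;\P)=f(\sum_{i\in I}g(y_i))$. I would verify the three hypotheses of the Hardy Littlewood Polya Theorem. The sequence $(x_i)$ is nonincreasing by assumption. The total masses agree, since $\sum_{i\in I}x_i=\mu(X)$ (as $\m\in\V(\mu;\Q)$) and $\sum_{i\in I}y_i=\mu(\bigcup_{i\in I}P_i)=\mu(X)$ (as $\P$ is a $\mu$-partition). The crucial point is the partial-sum domination $\sum_{i=1}^n x_i\leq\sum_{i=1}^n y_i$ for every $n\in I$. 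Here I would use that each $\m_{Q_i}$ is carried by $Q_i$, i.e. $\m_{Q_i}(X)=\m_{Q_i}(\bigcup_{j=1}^n Q_j)$ whenever $i\leq n$, to obtain
$$
\sum_{i=1}^n x_i=\sum_{i=1}^n\m_{Q_i}\Big(\bigcup_{j=1}^n Q_j\Big)\leq\sum_{i\in I}\m_{Q_i}\Big(\bigcup_{j=1}^n Q_j\Big)=\mu\Big(\bigcup_{j=1}^n Q_j\Big)=\mu\Big(\bigcup_{i=1}^n P_i\Big)=\sum_{i=1}^n y_i,
$$
where the middle equality uses $\sum_{i\in I}\m_{Q_i}=\mu$ and the last steps use $\bigcup_{i=1}^n P_i=\bigcup_{i=1}^n Q_i$ together with the disjointness of the $P_i$.

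Finally I would invoke the Hardy Littlewood Polya Theorem with $\varphi=g$ (note $g(0)=0$ for the entropies in question, so terms with vanishing mass are harmless). Under the first CGEF alternative $g$ is concave, so the theorem gives $\sum_{i\in I}g(x_i)\geq\sum_{i\in I}g(y_i)$, and applying the increasing function $f$ preserves this inequality, yielding (\ref{cosik}). Under the second alternative $g$ is convex, so the theorem gives $\sum_{i\in I}g(x_i)\leq\sum_{i\in I}g(y_i)$, and applying the decreasing $f$ reverses the inequality, again yielding (\ref{cosik}). I expect the main obstacle to be the partial-sum step: one must exploit precisely that $\m_{Q_i}$ is concentrated on $Q_i$ and that the disjointification preserves finite unions, and also confirm that $g$ meets the continuity and $g(0)=0$ normalisation required by the Hardy Littlewood Polya Theorem.
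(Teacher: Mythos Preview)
Your proposal is correct and follows essentially the same route as the paper: defining $x_i=\m_{Q_i}(X)$, $y_i=\mu(P_i)$, verifying the partial-sum domination via $\sum_{i=1}^n \m_{Q_i}(Q_i)\leq \mu(Q_1\cup\cdots\cup Q_n)=\sum_{i=1}^n\mu(P_i)$, and then applying the Hardy Littlewood Polya Theorem with $\varphi=g$ under the CGEF. If anything, your write-up is more explicit than the paper's about the disjointification identity $\bigcup_{i=1}^n P_i=\bigcup_{i=1}^n Q_i$ and about treating the two CGEF alternatives separately.
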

\begin{proof}
Directly from the definition of family $\P$ and $\Q$, we get that $\P$ is $\Q$-acceptable partition. 

We prove the inequality (\ref{cosik}). To do this we will use Hardy Littlewood Polya Theorem. We define the sequences $(x_i)_{i \in I} \subset [0,1]$ and $(y_i)_{i \in I} \subset [0,1]$ by the formulas
$$
x_i:=\m_{Q_i}(X)=\m_{Q_i}(Q_i), \, 
y_i:=\mu(P_i)
$$
for $i \in I$.  

Clearly, $(x_i)_{i \in I}$ is nonincreasing and
$$
\sum_{i \in I} x_i=\mu(X)=\sum_{i \in I} y_i.
$$
Moreover, for every $n \in I$:
$$
\sum_{i = 1}^n x_i=\sum_{i = 1}^n \m_{Q_i}(Q_i) 
=(\sum_{i = 1}^n \m_{Q_i})(Q_1 \cup \ldots \cup Q_n)
$$
$$
\leq \mu(Q_1 \cup \ldots \cup Q_n)
=\sum_{i = 1}^n \mu(P_i)=\sum_{i = 1}^n y_i.
$$

Thus these sequences satisfy the assumptions of Hardy Littlewood Polya Theorem. Making use of CGEF, we conclude that
$$
{\hv}(\mu; \m) = f\big(\sum_{i \in I} g(\m_{Q_i}(X))\big) 
= f(\sum_{i \in I} g(x_i)) 
$$
$$
\geq f(\sum_{i \in I} g(y_i)) 
= f\big(\sum_{i \in I}g(\mu(P_i))\big)
={\h}(\mu;\P),
$$
which completes the proof.
\end{proof}

We are going to present the main result of this paper -- the equivalence between classical and weighted entropy under the CGEF.

\begin{theo} \label{wnWaz}
Let $\Q$ be an error-control family and let $\h$ be the entropy function which satisfies the CGEF. The weighted form of the entropy function $\H$ equals the classical one, i.e.
$$
{\Hv}(\mu;\Q) = {\H}(\mu;\Q).
$$
\end{theo}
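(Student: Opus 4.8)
The plan is to obtain the equality by combining the two inequalities that are already at hand. One direction, ${\Hv}(\mu;\Q) \le {\H}(\mu;\Q)$, is precisely Proposition \ref{prop}, so nothing remains to be done there. All the work goes into the reverse inequality ${\Hv}(\mu;\Q) \ge {\H}(\mu;\Q)$, and the strategy is to show that for \emph{every} $\m \in \V(\mu;\Q)$ one can produce a $\Q$-acceptable partition $\P$ with ${\hv}(\mu;\m) \ge {\h}(\mu;\P)$. Since ${\H}(\mu;\Q)$ is by definition the infimum of ${\h}(\mu;\P)$ over partitions finer than $\Q$, each such bound gives ${\hv}(\mu;\m) \ge {\H}(\mu;\Q)$, and taking the infimum over $\m \in \V(\mu;\Q)$ on the left yields the claim.

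Before that, I would dispose of the degenerate case. If $\V(\mu;\Q)=\emptyset$, then ${\Hv}(\mu;\Q)=\inf(\emptyset)=\infty$; but the construction in the proof of Proposition \ref{prop} turns any partition $\P \prec \Q$ into an element of $\V(\mu;\Q)$, so the emptiness of $\V(\mu;\Q)$ forces that no such $\P$ exists, whence ${\H}(\mu;\Q)=\infty$ as well and equality holds trivially.

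Assume then $\m \in \V(\mu;\Q)$. The key reduction is that every such $\m$ is nonzero on at most countably many sets of $\Q$, so I restrict attention to the subfamily $\Q' := \{Q \in \Q : \m_Q(X) > 0\}$, which is countable. Because the sum defining ${\hv}$ runs only over $Q$ with $\m_Q(X)>0$, passing to $\Q'$ leaves ${\hv}(\mu;\m)$ unchanged, and since $\m_Q$ is the zero measure for $Q \notin \Q'$ the restriction still belongs to $\V(\mu;\Q')$. Two facts must now be verified, and I expect these to be the main (if routine) obstacle. First, the positive masses $\{\m_Q(X)\}_{Q \in \Q'}$ sum to $\mu(X)=1$, so only finitely many of them exceed any given threshold; hence $\Q'$ can be reindexed by $I$ (either $\N$ or a finite segment) so that $i \mapsto \m_{Q_i}(X)$ is nonincreasing. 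Second, the covering condition of Proposition \ref{waz} holds: writing $A := X \setminus \bigcup_{Q \in \Q'} Q$, each $\m_Q(A) \le \m_Q(X \setminus Q)=0$, so $\mu(A)=\sum_{Q \in \Q'}\m_Q(A)=0$.

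With both hypotheses checked, Proposition \ref{waz} applied to the reindexed family $\Q'$ and to $\m$ delivers a $\mu$-partition $\P \prec \Q'$ satisfying ${\hv}(\mu;\m) \ge {\h}(\mu;\P)$. Since every element of $\P$ is contained in a member of $\Q' \subseteq \Q$, we also have $\P \prec \Q$, so ${\h}(\mu;\P) \ge {\H}(\mu;\Q)$. Chaining these inequalities gives ${\hv}(\mu;\m) \ge {\H}(\mu;\Q)$ for every $\m \in \V(\mu;\Q)$, and taking the infimum over $\m$ yields ${\Hv}(\mu;\Q) \ge {\H}(\mu;\Q)$. Combined with Proposition \ref{prop}, this establishes the asserted equality. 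The delicate points are exactly the legitimacy of the reindexing into a nonincreasing sequence and the bookkeeping showing that restriction to $\Q'$ preserves both the value of ${\hv}$ and membership in the appropriate $\V$; everything else is a direct appeal to the two propositions.
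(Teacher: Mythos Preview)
Your proposal is correct and follows essentially the same route as the paper: one inequality from Proposition \ref{prop}, then for the reverse inequality restrict an arbitrary $\m \in \V(\mu;\Q)$ to the countable subfamily $\Q' = \{Q:\m_Q(X)>0\}$, reindex nonincreasingly, and invoke Proposition \ref{waz} to obtain a $\Q$-acceptable partition with smaller entropy. If anything, you are slightly more careful than the paper in explicitly verifying the covering hypothesis $\mu(X\setminus\bigcup_{Q\in\Q'}Q)=0$ of Proposition \ref{waz} and in justifying the nonincreasing reindexing.
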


\begin{proof}
We will show that ${\Hv}(\mu;\Q) \geq {\H}(\mu;\Q)$. The opposite inequality follows directly from Proposition \ref{prop}.

Let us first observe that if $\V(\mu;\Q) = \emptyset$ then ${\Hv}(\mu;\Q) = \infty$ and consequently the proof is completed since ${\Hv}(\mu;\Q) \geq {\H}(\mu;\Q)$. 

Thus let us assume that $\m \in \V(\mu;\Q)$. We construct the subset of family $\Q$ by:
$$
\tilde{\Q} := \{Q \in \Q : \m_Q(X) > 0\}.
$$
Clearly, $\tilde{\Q}$ is a countable family since $\sum\limits_{Q \in \tilde{\Q}} \m_Q(X) = 1$ and $\tilde{\m}:=\m_{|\tilde{\Q}} \in \V(\mu; \tilde{\Q})$. Moreover, $\tilde{\Q} \prec \Q$ and ${\hv}(\mu;\tilde{\m})={\hv}(\mu;\m)$.

As $\tilde{\Q}$ is countable, we may find a set of indices $I \subset \N$ such that $\tilde{\Q}=\{Q_i\}_{i \in I}$ and the sequence $I \ni i \rightarrow \m_{Q_i}(X)$ is nonincreasing. Making use of Proposition \ref{waz} we construct a $\mu$-partition $\P \prec \tilde{\Q}$, which satisfies
$$
{\hv}(\mu;\tilde{\m}) \geq {\h}(\mu;\P).
$$
This completes the proof since $\P \prec \tilde{\Q} \prec \Q$ and ${\hv}(\mu;\m) = {\hv}(\mu;\tilde{\m}) \geq {\h}(\mu;\P)$.
\end{proof}


\section{Application of weighted form of the entropy} \label{TsaSec}

In this section we present how use the weighted form of the entropy in the calculation of the entropy of the mixture of measures. The reader interested in the topic of the mixture of measures is referred to \cite{Sm} where this problem is explain in details.

Our aim is to show how estimate the Tsallis entropy $\Hta$ of the mixture of measures in terms of the entropies of the individual measures. Let us start with the proposition:
\begin{prop} \label{propozycja}
We assume that $\alpha \in (0,\infty)\setminus\{1\}$ and $n \in \N$. Let $a_k \in (0,1)$ for $k \in \{1,\ldots,n\}$ be such that $\sum\limits_{k=1}^n a_k = 1$ and let $\{\mu_k\}_{k=1}^n$ be a family of probability measures. We define $\mu := \sum\limits_{k=1}^n a_k \mu_k$. 
\begin{itemize}	
\item
If $\P$ is a $\mu$-partition of $X$ then $\P$ is a $\mu_k$-partition of $X$ for $k \in \{1,\ldots,n\}$ and
\begin{equation} \label{pierwsza}
{\hta}(\mu; \P) \geq \sum_{i=1}^n a_i \hta(\mu_i;\P).
\end{equation}
\item If $\Q \subset \Sigma$ and $\m^k \in \V(\mu_k;\Q)$ for $k \in \{1,\ldots,n\}$ then $\m := \sum\limits_{k=1}^n a_k \m^k \in \V(\mu;\Q)$ and
\begin{equation} \label{druga}
{\htva}(\mu; \m) \leq \sum_{i=1}^n a_i^\alpha \htva(\mu_i;\m) + \frac{\sum_{k=1}^n a_k^\alpha - 1}{1-\alpha}.
\end{equation}
\end{itemize}
\end{prop}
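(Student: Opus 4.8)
The plan is to prove both bullets by unwinding the definitions $f(x)=\frac{1}{1-\alpha}(x-1)$ and $g(x)=x^\alpha$, so that each asserted inequality collapses, after cancelling the additive constants, to a \emph{pointwise} comparison of the power function $t\mapsto t^\alpha$ on a sum versus summed. The only delicate bookkeeping is that multiplying through by $(1-\alpha)$ reverses the inequality exactly when $\alpha>1$, and this sign change is precisely compensated by passing between the concave/subadditive regime ($0<\alpha<1$) and the convex/superadditive regime ($\alpha>1$) of $t^\alpha$; thus both ranges of $\alpha$ yield the same stated direction.

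For the first bullet, the partition claim is immediate: since $a_k>0$ and $\mu=\sum_k a_k\mu_k$, the identity $\mu(X\setminus\bigcup_{P}P)=0$ forces $\mu_k(X\setminus\bigcup_{P}P)=0$ for each $k$, so $\P$ is a $\mu_k$-partition. For (\ref{pierwsza}) I would expand
\[
\hta(\mu;\P)=\frac{1}{1-\alpha}\Big(\sum_{P\in\P}\big(\sum_{k} a_k\mu_k(P)\big)^\alpha-1\Big),
\qquad
\sum_{i} a_i\hta(\mu_i;\P)=\frac{1}{1-\alpha}\Big(\sum_{P\in\P}\sum_{i} a_i\mu_i(P)^\alpha-1\Big),
\]
where the constant on the right simplifies by $\sum_i a_i=1$. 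The inequality then reduces, set by set, to comparing $\big(\sum_k a_k\mu_k(P)\big)^\alpha$ with $\sum_k a_k\mu_k(P)^\alpha$, which is Jensen's inequality for the probability weights $(a_k)$ applied to $t^\alpha$ (concave when $\alpha<1$, convex when $\alpha>1$); after dividing by $1-\alpha$ both cases give ``$\geq$''.

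For the second bullet, membership $\m\in\V(\mu;\Q)$ follows since each $\m_Q=\sum_k a_k\m^k_Q$ is a nonnegative combination of measures, $\m_Q(X\setminus Q)=\sum_k a_k\m^k_Q(X\setminus Q)=0$, and $\sum_{Q}\m_Q=\sum_k a_k\sum_Q\m^k_Q=\sum_k a_k\mu_k=\mu$. For (\ref{druga}) I would expand both sides, reading the right-hand entropies as those of the individual divisions $\m^i\in\V(\mu_i;\Q)$; the crucial observation is that the extra term $\frac{\sum_k a_k^\alpha-1}{1-\alpha}$ is exactly the Tsallis entropy of the weight vector $(a_k)$, and it absorbs the $-\sum_i a_i^\alpha$ produced by the factors $a_i^\alpha$ so that all additive constants collapse to a single $-1$. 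What remains is, for each $Q\in\Q$, the comparison of $\big(\sum_k a_k\m^k_Q(X)\big)^\alpha$ with $\sum_k\big(a_k\m^k_Q(X)\big)^\alpha$; writing $b_k:=a_k\m^k_Q(X)$ this is the sub-/super-additivity of $t\mapsto t^\alpha$, namely $(\sum_k b_k)^\alpha\le\sum_k b_k^\alpha$ for $\alpha<1$ and ``$\ge$'' for $\alpha>1$. Dividing by $1-\alpha$ turns both into the stated ``$\leq$''.

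The main obstacle is not any single estimate -- each reduces to a one-line power inequality -- but the careful rearrangement of the additive constants together with the uniform sign analysis across the two ranges of $\alpha$. It is worth emphasizing that the two bullets rest on genuinely different properties of $t^\alpha$: the partition inequality uses \emph{Jensen} (concavity/convexity with weights $a_k$ summing to one), whereas the weighted inequality uses \emph{sub-/super-additivity} (weights $a_k^\alpha$ arising from pulling $a_k$ inside the power). A minor point is that all sums have nonnegative terms, so no convergence issues arise even when $\Q$ is countably infinite.
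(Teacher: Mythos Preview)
Your proposal is correct and follows essentially the same route as the paper: expand the Tsallis entropy, reduce (\ref{pierwsza}) pointwise to Jensen's inequality for $t\mapsto t^\alpha$ with weights $(a_k)$, and reduce (\ref{druga}) pointwise to sub-/super-additivity of $t\mapsto t^\alpha$ after the algebraic cancellation of the constants. Your explicit tracking of the sign of $1-\alpha$ across the two regimes and your remark that the two bullets rely on \emph{different} properties of the power function are in fact clearer than the paper's terse appeal to ``CGEF''.
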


\begin{proof}
Let us first observe that $\P$ is a $\mu_k$-partition of $X$, for every $k \in \{1,\ldots,n\}$. 

Then, making use of CGEF, we have
$$
{\hta}(\mu;\P)= \frac{1}{1-\alpha} \big[ \sum_{P \in \P}\big(\sum_{k=1}^n a_k\mu_k(P)\big)^\alpha - 1\big]
$$
$$
\geq \frac{1}{1-\alpha} \big[ \sum_{k=1}^n \big(a_k \sum_{P \in \P} \mu_k(P)^\alpha\big) - 1 \big]  
$$
$$
=\frac{1}{1-\alpha} \big[\sum_{k=1}^n a_k \big( \sum_{P \in \P} \mu_k(P)^\alpha - 1\big)\big]
$$
$$
=\sum_{k=1}^n a_k \hta(\mu_k;\P).
$$
It completes (\ref{pierwsza}).

We derive the second part of the proposition. Clearly, $\m \in \V(\mu;\Q)$. To see that (\ref{druga}) holds, we again use the CGEF:
$$
{\htva}(\mu;\m)= \frac{1}{1-\alpha} \big[ \sum_{Q \in \Q}\big(\sum_{k=1}^n a_k\m_k(Q)\big)^\alpha - 1\big]
$$
$$
\leq \frac{1}{1-\alpha} \big[ \sum_{k=1}^n \big(a_k^\alpha \sum_{Q \in \Q} \m_k(Q)^\alpha\big) - 1 \big]  
$$
$$
=\frac{1}{1-\alpha} \big[\sum_{k=1}^n a_k^\alpha \big( \sum_{Q \in \Q} \m_k(Q)^\alpha - 1 \big)\big] + \frac{\sum_{k=1}^n a_k^\alpha - 1}{1-\alpha}
$$
$$
=\sum_{k=1}^n a_k^\alpha \htva(\mu_k;\m) + \frac{\sum_{k=1}^n a_k^\alpha - 1}{1-\alpha}.
$$
\end{proof}

This result allows us to estimate the Tsallis entropy $\Hta$ of the mixture of measures.
\begin{theo} \label{corEnt}
Let $\alpha \in (0,\infty)\setminus\{1\}$ and $n \in \N$. We assume that $a_k \in [0,1]$ for $k \in \{1,\ldots,n\}$ be such that $\sum\limits_{k=1}^n a_k = 1$. Let $\{\mu_k\}_{k=1}^n$ be a family of probability measures and $\mu := \sum\limits_{k=1}^n a_k \mu_k$. If $\Q \subset \Sigma$ then
\begin{equation} \label{nierowWaz}
{\Hta}(\mu; \Q) \geq \sum_{i=1}^n a_i \Hta(\mu_i;\P).
\end{equation}
and
\begin{equation} \label{nierowWaz2}
{\Hta}(\mu; \Q) \leq \sum_{i=1}^n a_i^\alpha \Hta(\mu_i;\P) + \frac{\sum_{k=1}^n a_k^\alpha - 1}{1-\alpha}.
\end{equation}
\end{theo}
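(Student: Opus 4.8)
The plan is to prove the two bounds (\ref{nierowWaz}) and (\ref{nierowWaz2}) separately, by transferring the partition-level inequality (\ref{pierwsza}) and the weighting-level inequality (\ref{druga}) of Proposition \ref{propozycja} up to the cover quantity $\Hta$ through the appropriate infimum. The key auxiliary fact is the equivalence $\Htva(\mu;\Q)=\Hta(\mu;\Q)$ from Theorem \ref{wnWaz}, which is available because Tsallis entropy satisfies the CGEF; it is precisely this identification that allows the weighted estimate (\ref{druga}) to be used to bound $\Hta$ from above.

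For the lower bound (\ref{nierowWaz}) I would take an arbitrary $\mu$-partition $\P \prec \Q$. By the first part of Proposition \ref{propozycja}, $\P$ is then simultaneously a $\mu_i$-partition for each $i$ and $\hta(\mu;\P) \geq \sum_{i=1}^n a_i\,\hta(\mu_i;\P)$. Since each such $\P$ is a $\mu_i$-partition finer than $\Q$, the definition of $\Hta$ as an infimum gives $\hta(\mu_i;\P) \geq \Hta(\mu_i;\Q)$, whence $\hta(\mu;\P) \geq \sum_{i=1}^n a_i\,\Hta(\mu_i;\Q)$. The right-hand side no longer depends on $\P$, so taking the infimum over all $\mu$-partitions $\P \prec \Q$ yields $\Hta(\mu;\Q) \geq \sum_{i=1}^n a_i\,\Hta(\mu_i;\Q)$. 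If no $\mu$-partition finer than $\Q$ exists the left side is $\infty$ and the bound is trivial; the first part of Proposition \ref{propozycja} moreover shows that a $\mu$-partition finer than $\Q$ is automatically a $\mu_i$-partition, so the two sides become infinite in tandem.

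For the upper bound (\ref{nierowWaz2}) I would first discard the indices with $a_k=0$, since both $a_k\mu_k$ and $a_k^\alpha$ then vanish (and if a single $a_k$ equals $1$ the statement is immediate, as $\mu=\mu_k$), reducing to the situation covered by Proposition \ref{propozycja} with all $a_k\in(0,1)$. If some $\Hta(\mu_k;\Q)=\infty$ the right-hand side is $+\infty$ and there is nothing to prove, so assume each $\Hta(\mu_k;\Q)$ is finite, which also guarantees $\V(\mu_k;\Q)\neq\emptyset$. Fix $\e>0$; using $\Htva(\mu_k;\Q)=\Hta(\mu_k;\Q)$ from Theorem \ref{wnWaz}, choose for each $k$ a weighting $\m^k \in \V(\mu_k;\Q)$ with $\htva(\mu_k;\m^k) \leq \Hta(\mu_k;\Q)+\e$. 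Setting $\m:=\sum_{k=1}^n a_k\m^k$, the second part of Proposition \ref{propozycja} gives $\m \in \V(\mu;\Q)$ and $\htva(\mu;\m) \leq \sum_{i=1}^n a_i^\alpha\,\htva(\mu_i;\m^i) + \tfrac{\sum_{k=1}^n a_k^\alpha-1}{1-\alpha}$. Since $a_i^\alpha \geq 0$, the near-optimal bounds may be substituted to obtain $\htva(\mu;\m) \leq \sum_{i=1}^n a_i^\alpha\,\Hta(\mu_i;\Q) + \tfrac{\sum_{k=1}^n a_k^\alpha-1}{1-\alpha} + \e\sum_{i=1}^n a_i^\alpha$. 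Finally $\Hta(\mu;\Q)=\Htva(\mu;\Q) \leq \htva(\mu;\m)$ by Theorem \ref{wnWaz} and the definition of the weighted infimum, and letting $\e \to 0$ yields (\ref{nierowWaz2}).

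The arithmetic inside each step is routine; the main obstacle is the bookkeeping around the infima and the possible $+\infty$ values. The delicate point for the upper bound is that Proposition \ref{propozycja} is stated for \emph{fixed} weightings, so I must choose an $\e$-optimal $\m^k$ independently for each $k$, verify that the convex combination $\m$ is admissible, and exploit the nonnegativity of $a_i^\alpha$ so that the single parameter $\e$ controls all $n$ error terms at once before passing to the limit. The only other care required is that $\Hta$ and $\Htva$ be identified through Theorem \ref{wnWaz} in both directions, which is legitimate precisely because Tsallis entropy meets the CGEF.
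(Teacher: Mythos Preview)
Your proposal is correct and follows essentially the same route as the paper: the lower bound is obtained by applying Proposition~\ref{propozycja}(\ref{pierwsza}) to an arbitrary (or $\varepsilon$-optimal) $\mu$-partition $\P\prec\Q$ and then passing to the infimum, while the upper bound is obtained by choosing near-optimal weightings $\m^k\in\V(\mu_k;\Q)$, combining them via Proposition~\ref{propozycja}(\ref{druga}), and identifying $\Htva=\Hta$ through Theorem~\ref{wnWaz}. Your write-up is in fact a bit more explicit than the paper's in naming the use of Theorem~\ref{wnWaz} and in separating the $a_k=0$ and $\Hta(\mu_k;\Q)=\infty$ edge cases, but the substance is identical.
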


\begin{proof}
Let us first consider the case when ${\Hta}(\mu_k;\Q) = \infty$ for a certain $k\in\{1,\ldots,n\}$. Then also ${\Hta}(\mu;\Q) = \infty$ and the inequalities hold trivially. 

Thus let us assume that for every $k\in\{1,\ldots,n\}$, ${\Hta}(\mu_k;\Q) < \infty$. Without loss of generality, we may assume also that $a_k \neq 0$ for every $k \in \{1,\ldots,n\}$. Let $\e>0$ be arbitrary. 

To prove the first inequality, we find a $\mu$-partition $\P$ finer than $\Q$ such that 
\begin{equation} \label{gwiazdka}
{\Hta}(\mu;\Q) \geq {\hta}(\mu;\P)-\e. 
\end{equation}
Consequently, by Proposition \ref{propozycja} and the definition of Tsallis entropy, we have
\begin{equation}
{\hta}(\mu;\P) = {\hta}(\sum_{k=1}^n a_k \mu_k; \P) 
\end{equation}
\begin{equation}
\geq \sum_{i=1}^n a_i \hta(\mu_i;\P) \geq \sum_{i=1}^n a_i \Hta(\mu_i;\Q).
\end{equation}
Finally by (\ref{gwiazdka}), we obtain
\begin{equation}
{\Hta}(\mu;\Q) \geq {\hta}(\mu; \P) -\e 
\geq \sum_{i=1}^n a_i \Hta(\mu_i;\Q) - \e,
\end{equation}
which proves (\ref{nierowWaz}).

We prove the inequality (\ref{nierowWaz2}). For each $k \in \{1,\ldots n\}$ we find $\m^k \in \V(\mu_k;\Q)$ satisfying
\begin{equation} \label{2gwiazdki}
{\htva}(\mu_k; \m^k) \leq {\Hta}(\mu_k;\Q) + \frac{\e}{n}.
\end{equation}

Making use of Proposition \ref{propozycja} and (\ref{2gwiazdki}), we have
\begin{equation}
{\Hta}(\mu;\Q) \leq \sum_{i=1}^n a_i^\alpha \htva(\mu_i;\m)+ \frac{\sum_{k=1}^n a_k^\alpha - 1}{1-\alpha}
\end{equation}
\begin{equation}
\leq \sum_{i=1}^n a_i^\alpha \Hta(\mu_i;\Q)+ \frac{\sum_{k=1}^n a_k^\alpha - 1}{1-\alpha} + \e.
\end{equation}
This completes the proof as $\e>0$ was an arbitrary number.
\end{proof}

Let us observe that the estimation (\ref{nierowWaz}) and (\ref{nierowWaz2}) cannot be improved.

\begin{example} \label{exSharp}
We assume that $\alpha \in (0,\infty) \setminus \{1\}$, $X=\{0,1\}$ and $\mu_1, \mu_2$ denote discrete measures such that:
\begin{equation}
	\mu_1(\{0\})=1 \text{ and } \mu_2(\{1\})=1.
\end{equation}
Then, we have 
\begin{equation}
{\Hta}(a_1\mu_1+a_2\mu_2) = \frac{a_1^\alpha + a_2^\alpha - 1}{1-\alpha}.
\end{equation} 
It is exactly the right side of the inequality (\ref{nierowWaz2}). 

On the other hand, let $\mu_1, \mu_2$ be two measures which satisfy $\mu_1 = \mu_2$. Then 
\begin{equation}
{\Hta}(a_1\mu_1+a_2\mu_2) = {\Hta}(\mu_1)={\Hta}(\mu_2).
\end{equation}
It equals the right side of (\ref{nierowWaz}).
\end{example}

It is well-known that $\hta(\mu;\P) \to \hs(\mu;\P)$, when $\alpha \to 1$. Let us observe a similar relation between bounds obtained for Shannon entropy \cite[Theorem III.1]{Sm} and Tsallis entropy from Theorem \ref{corEnt}. Let us consider the functions:
\begin{equation}
l_\alpha(x,y) = a_1 x + a_2 y,
\end{equation}
\begin{equation}
u_\alpha(x,y) = a_1^\alpha x + a_2^\alpha y + \frac{ a_1^\alpha + a_1^\alpha - 1}{1-\alpha},
\end{equation}
which describe the lower and upper bound for the Tsallis entropy of order $\alpha$. If $x, y$ are non negative real numbers then these functions converge to the corresponding bounds calculated for Shannon entropy as $\alpha \to 1$, i.e.:
\begin{equation}
\left\{
   \begin{array}{ll}
		l_\alpha(x,y) \to a_1 x + a_2 y \\ 
		u_\alpha(x,y) \to a_1 x + a_2 y  - a_1 \log_2(a_1) - a_2 \log_2(a_2)\text{, }
	\end{array}
\right.
\end{equation}
when $\alpha \to 1$.

\section{Conclusion}

The weighted form of the entropy is very useful to derive properties of the entropy of the mixture of measures. We presented the condition under which every entropy function can be defined in the weighted way. The well-known Shannon, R\'enyi and Tsallis entropies satisfy this natural condition. We gave an example how use the weighted entropy to estimate the Tsallis entropy of order $\alpha$ of the mixture of measures. Obtained bounds are sharp and as a function of parameter $\alpha$, they converge to the corresponding bounds calculated for Shannon entropy. In similar manner, we can apply the tool of weighted entropy to compute for instance R\'enyi entropy of the mixture. 

\bibliographystyle{agsm}
\bibliography{entropy}

\end{document}